\newcommand{\hilb}[0]{%
  \mathcal{H}%
}
\newcommand{\hilbdm}[0]{%
  \mathcal{D}%
}
\newcommand{\thetahat}[0]{%
  \hat{\theta}%
}
\newcommand{\ch}[0]{%
  \mathcal{E}%
}
\newtheorem{problem}{Problem}
\newtheorem{theorem}{Theorem}
\newacro{CNOT}{controlled $X$}
\newacro{GHZ}{Greenberger-Horne-Zeillinger}
\newacro{QFI}{Quantum Fisher Information}
\newacro{QFIM}{Quantum Fisher Information Matrix}
\title{Quantum Network Tomography with Multi-party State Distribution}
\author[1]{Matheus Guedes de Andrade}
\author[2]{Jaime Días}
\author[2]{Jake Navas}
\author[3]{Saikat Guha}
\author[2]{Inès Montaño}
\author[4]{Brian Smith}
\author[4]{Michael Raymer}
\author[1]{Don Towsley}
\affil[1]{College of Information and Computer Science, University of Massachusetts Amherst}
\affil[3]{College of Optical Sciences, University of Arizona}\date{January 2022}
\affil[2]{Department of Applied Physics and Materials Science, Northern Arizona University}
\affil[4]{Department of Physics, University of Oregon}
\begin{document}

\maketitle

\begin{abstract}
    The fragile nature of quantum information makes it practically impossible to completely isolate a quantum state from noise under quantum channel transmissions. Quantum networks are complex systems formed by the interconnection of quantum processing devices through quantum channels. In this context, characterizing how channels introduce noise in transmitted quantum states is of paramount importance. Precise descriptions of the error distributions introduced by non-unitary quantum channels can inform quantum error correction protocols to tailor operations for the particular error model. In addition, characterizing such errors by monitoring the network with end-to-end measurements enables end-nodes to infer the status of network links.
    In this work, we address the end-to-end characterization of quantum channels in a quantum network by introducing the problem of Quantum Network Tomography. The solution for this problem is an estimator for the probabilities that define a Kraus decomposition for all quantum channels in the network, using measurements performed exclusively in the end-nodes. We study this problem in detail for the case of arbitrary star quantum networks with quantum channels described by a single Pauli operator, like bit-flip quantum channels. We provide solutions for such networks with polynomial sample complexity. Our solutions provide evidence that pre-shared entanglement brings advantages for estimation in terms of the identifiability of parameters.
\end{abstract}

\section{Introduction}

Quantum networks are communication systems formed by the interconnection of quantum processors with channels that enable the communication of quantum information~\cite{kimble2008quantum, van2014quantum, wehner2018quantum, wojciech2019networks}.
They extend the capabilities of quantum computers, augmenting the computing power of interconnected quantum processors with distributed quantum computing~\cite{beals2013efficient, van2016local}, and enable novel applications such as quantum key distribution~\cite{bennett2020quantum}.
As with any quantum processing system, noise is inherent to quantum networks due to the fragile nature of quantum information.

In networking systems, noise arises both in the communication of quantum information through channels and in the quantum operations performed by the nodes. The diverse ecosystem of physical platforms for qubit implementations, as well as the communication media used, have drastic impacts on errors introduced during communication, such that the development of technologies to transduce quantum information is critical~\cite{rakher2010quantum ,PRXQuantum.2.017002}. With respect to communication, photons acquire different errors when propagated through optical fiber and when propagated through free-space~\cite{peng2005experimental}. In the scenario of heterogeneous quantum networks that, for instance, make use of satellites and optical fibers to enable quantum communication, characterizing communication errors is central.
For instance, quantum error correction processs~\cite{grassl2018quantum} and entanglement purification methods~\cite{rozpkedek2018optimizing} can benefit from a precise description of error models to improve efficiency.

In the context of characterizing quantum systems, the theory of quantum estimation determines the fundamental limits to what can be inferred from measurements and the efficiency of realizable estimators~\cite{helstrom1969quantum, liu2019quantum}. Thus, quantum estimation is the key tool to describe errors introduced by noise in communication in a quantum network. In quantum information, Quantum Tomography refers to a set of methods for inferring the description of quantum systems~\cite{d2003quantum}.
Quantum State Tomography~\cite{smithey1993measurement} and Quantum Process Tomography~\cite{mohseni2008quantum} refer to methods that aim to provide descriptions of quantum states and of quantum evolution processes, respectively. Hence, the natural approach to characterize errors in a quantum network is to use quantum tomography methods. In this case, the network would be treated as a black-box and probed through the transmission of quantum states which would be measured to estimate the behavior of the entire network.

Unfortunately, such methods are not directly applicable if one wants to characterize the behavior of individual network channels by using the network to transmit quantum information. Consequently, investigating methods that use prior knowledge on the topology of the network, as well as on the form of the errors that arise from noise introduced by quantum channel transmissions brings a new perspective to such a tomography problem.

Exploiting prior knowledge to characterize channels in classical networks has been previously investigated and motivated the development of classical Network Tomography~\cite{castro2004network, duffield2006network}. Classical Network Tomography refers to a set of methods that aim to estimate the status of network links through end-to-end measurements. The key idea is to transmit messages among users in the network with the necessary meta-information to measure classical quantities such as transmission delay and packet loss. In the classical network scenario, tomography methods that use both unicast and multicast communication were proposed. Interestingly, link correlations introduced by multicast communication in packet loss were exploited to provide estimators for the loss probability per link in multicast trees from end-to-end measurements~\cite{presti2002multicast, mincloss}.

\subsection{Contributions}
The goal of this work is to bridge quantum tomography with classical network tomography by introducing the problem of Quantum Network Tomography. Our contributions are as follows.
\begin{itemize}
    \item We formally define the problem of Quantum Network Tomography for generic quantum networks as the estimation of probabilities determining operator-sum representations for all quantum channels in a network. The problem captures the end-to-end estimation of parameters by considering estimators based on measurements performed exclusively in the end-nodes.
    \item We define a generic multi-party state distribution process for the network tomography of arbitrary trees.
    \item We solve quantum network tomography problems in star networks using the multi-party state distribution process defined.
    In particular, we focus on the case where channels are characterized by a single Pauli operator, such as a bit-flip channel, and provide estimators for network parameters with polynomial sample complexity. Our estimators show that, even for single Pauli channels, pre-shared entanglement may provide advantages for estimation in terms of parameter identifiability.
    \item Finally, we show that our estimators attain the Quantum Cramèr-Rao bound and compare their performance numerically for a particular star network.
\end{itemize}

The remainder of this article is structured as follows. In Section~\ref{sec:background}, we provide the necessary mathematical background to discuss our results. In Section~\ref{sec:tomography}, we define the problem of quantum network tomography and describe the multi-party state distribution process. We report our results for the tomography problem in the case of star networks with channels described by a single Pauli operator in Section~\ref{sec:stars}. Finally, we discuss our results and present concluding remarks in Section~\ref{sec:conclusion}.

\section{Background}\label{sec:background}

In this work, we consider quantum networks to be systems formed by the interconnection of quantum processing devices with quantum channels that allow for communication of quantum information. We do not assume any particular physical implementation of the underlying quantum channels, nor any choice of qubit platform. In addition, we use the term quantum processing device, or quantum processor, to abstract quantum computers, routers, switches and repeaters.
We represent a quantum network as a graph $G = (V, E)$, where the nodes in $V$ denote the quantum processors and the edges in $E$ represent the quantum channels interconnecting the nodes. In addition, the node set is 
partitioned as $V = V_E \cup V_I$, where $V_E$ and $V_I$ determine the sets of end- and intermediate nodes in the network.

Quantum processors are assumed capable of performing generic, unitary quantum operations on its quantum registers and perfect quantum measurements. This assumption implies that the only source of error in the network comes from the transmission of quantum information through the channels. Despite being unrealistic, assuming perfect quantum operations in the nodes is the initial step for the investigation of the estimation problems within the scope of this article.

A star quantum network is a system consisting of one intermediate node interconnecting $n$ end-nodes. In this particular case, we have $|V| = n + 1$, $|E| = n$ and each end-node identifies one edge of the star. For simplicity, we label the intermediate node as node $n$ and edge $(v, n)$ as the $v$-th edge, for $v \in \{0, 1, \ldots, n - 1\}$, \textit{i.e} $V_E = \{0, 1, \ldots, n - 1\}$  and $V_I = \{n\}$.

In what concerns notation, we use $\hilb^{K}$ to represent the Hilbert space formed by $K$ qubits, \textit{i.e} the Hilbert space of dimension $2^{K}$, and $\hilbdm^{K}$ to represent the space of density matrices of $K$-qubit systems.
We represent $n$-qubit Greenberger–Horne–Zeilinger (GHZ) states as
\begin{align}
    & \ket{\Phi_{s}^{b}} = \frac{\ket{0s} + (-1)^{b} \ket{1\overline{s}}}{\sqrt{2}}, \label{eq:ghz}
\end{align}
where $s \in \{0, 1\}^{n - 1}$  and $b \in \{0, 1\}$. GHZ-basis projectors are defined as $\Phi_{s}^{b} = \dyad*{\Phi_{s}^{b}}$. We use the standard notation of $[.,.]: [A,B] = AB - BA$ and $\{.,.\}: \{A, B\} = AB + BA$ to respectively denote the commutator and anti-commutator operators.

\subsection{Quantum channels in the network}

The edges in $|E|$ represent Completely Positive Trace-Preserving (CPTP) single-qubit maps. We assume that, for every edge $e = (u, v) \in E$,
the quantum channel $\mathcal{E}_v: \hilbdm^{2} \to \hilbdm^{2}$ that interconnects nodes $u$ and $v$ corresponds to the mapping
\begin{align}
    & \mathcal{E}_{e}(\rho) = \sum_{k = 0}^{d_e - 1} \theta_{ek} U_{ek} \rho U_{ek}^{\dagger} \label{eq:gench},
\end{align}
where $\rho$ is a one-qubit density matrix, $\{U_{ek}\}$ is a set of $d_e \in \mathbb{N}$ unitary operators and
$\theta_{e} \in \mathbb{R}^{d_e}$ is a probability vector. We assume the channel models are known  in the sense that the set of operators $\{U_{ek}\}$ is known for all channels. In precise terms, we assume 
prior knowledge on a set of unitaries for which there exists a physically valid operator-sum representation of
channel $e$ in the form of \eqref{eq:gench}, for all $e \in E$.
In the case of the generic depolarizing channel, \eqref{eq:gench} reduces to
\begin{align}
    & \ch_e(\rho) = \sum_{k = 0}^{3} \theta_{ek} \sigma_{k} \rho \sigma_{k}, \label{eq:gendepch}
\end{align}
where $\theta_e \in \mathbb{R}^{4}$ and $\{\sigma_k\}$ denote the set of Pauli operators, with $\sigma_0 = I$.
When channels are described by a single Pauli operator,
\eqref{eq:gench} further reduces to 
\begin{align}
    & \mathcal{E}_e(\rho) = \theta_{e} \rho + (1 - \theta_{e}) \sigma \rho \sigma \label{eq:single_pauli}
\end{align}
where $\sigma \in \{X, Y, Z\}$ is one of the Pauli operators and the channel is described by a single parameter $\theta_e \in \mathbb{R}$. The description in \eqref{eq:gench} is general as it captures any physical process where a qubit state is indeed transmitted from a node to another.


\subsection{Quantum estimation}

Quantum estimation theory addresses how measurements can be used to obtain information
from quantum systems. There are three fundamental aspects of any quantum estimation problem. Assume that it is of interest to estimate a parameter vector $\theta$. First, it is necessary to obtain a quantum state ensemble $\rho$ that is parameterized by $\theta$, what is normally referred to as parameterization. Second, it is necessary to measure $\rho$ to obtain measurement statistics. Finally, we need to design estimators for $\theta$ based on measurement outcomes.

Suppose that we are given a parameterized state ensemble of $N$ qubits. In particular, let $\rho: \mathbb{R}^{M} \to \hilbdm^{N}$ denote the density matrix of the system that depends on a parameter vector $\theta \in \mathbb{R}^{M}$ as
\begin{align}
    & \rho(\theta) = \sum_{k = 0}^{R - 1}\lambda_k(\theta) \Lambda_k(\theta), \label{eq:dm}
\end{align}
where $R$ is the rank of $\rho(\theta)$, $\{\Lambda_k\}$ is the set of projectors for the eigenspace of $\rho(\theta)$  and  $\lambda_k: \mathbb{R}^{M} \to \mathbb{R}$ are $\theta$-dependent probability values for which $\sum_{k}\lambda_k(\theta) = 1$. We are interested in the following quantum estimation problem.

\begin{problem}[Quantum parameter estimation]\label{prob:quantumest}
Find an estimator $\thetahat$ for $\theta$ from measurements performed in an ensemble $\rho(\theta)$. 
\end{problem}

The measurement statistics used to describe $\thetahat$ will depend on the measurement performed. For any set of \textit{Positive Operator-Valued Measure} (POVM) elements $\{\Pi_l\}$, the probability of having outcome $l$ as a measurement result is
\begin{align}
    & p_{\theta}(\Pi_l) = \sum_{k = 0}^{R - 1}\lambda_k(\theta) \Tr[\Lambda_k(\theta) \Pi_l]. \label{eq:mes_prob}
\end{align}
Thus, if the form of $p_{\theta}$ is known, measuring an ensemble of states described by $\rho(\theta)$ with $\{\Pi_l\}$ yields estimators $\hat{p}_{\theta}(\Pi_l)$, and $\thetahat$ can be obtained by solving the inverse problem
\begin{equation}
    \begin{cases}
        & p_{\hat{\theta}}(\Pi_l) = \hat{p}_{\theta}(\Pi_l) \text{, for all } l, \\
        & \norm*{\hat{\theta}}_1 = 1.
    \end{cases} \label{eq:inv_est_prob}
\end{equation}

There are two aspects of such estimation problem. First, the number of equations obtained is the number of POVM elements. For projective measurements, the number of equations grows as $\mathcal{O}(2^N)$ because of the completeness relation. Second, the dependence of $p_{\hat{\theta}}(\Pi_l)$ on $\theta$ does not guarantee that the inverse problem in \eqref{eq:inv_est_prob} has a unique solution.

\subsection{Estimation efficiency}

In general, different parametrization processes provide mixed states described by different density matrices, which leads to distinct estimators $\thetahat$ for $\theta$. In addition, different POVMs yield distinct estimators for the same density matrix $\rho(\theta)$. Thus, we analyze the efficiency of different estimators for  Problem \ref{prob:quantumest} with two metrics of interest: the \textit{identifiability} of $\hat{\theta}$ and the \textit{Quantum Fisher Information Matrix} (QFIM) for $\theta$~\cite{liu2019quantum}. We say that an estimator $\hat{\theta}$ identifies $\theta$ if it determines a unique value for $\theta$ from a sequence of observations. In the case of \eqref{eq:inv_est_prob}, $\hat{\theta}$ identifies the parameters if it is the unique solution to the inverse problem. The QFIM $\mathcal{F}$ of a density matrix $\rho(\theta)$~\cite{helstrom1969quantum} is a semi-positive definite real matrix with entries
\begin{align}
    & \mathcal{F}_{jk} = \frac{1}{2}\Tr(\rho\{L_j, L_k\}) \label{eq:qfim},
\end{align}
where $L_k$ is the \textit{Symmetric Logarithmic Derivative} (SLD) of $\rho(\theta)$ with respect to $\theta_j$ given by the differential matrix equation
\begin{align}
    & \frac{\partial \rho}{\partial \theta_j}  = \frac{1}{2} (L_j\rho + \rho L_j). \label{eq:sld}
\end{align}
The diagonal entry $\mathcal{F}_{jj}$ contains the \textit{Quantum Fisher Information} (QFI) for parameter $\theta_{j}$. The QFIM yields the \textit{Quantum Cramèr-Rao bound} (QCRB) for multi-parameter estimation, which is a lower bound for the covariance matrix of estimators based on statistics generated by any POVM~\cite{liu2019quantum}. An estimator that reaches the QCRB for the joint estimation of $\theta$ is attainable if, and only if $L_j$ and $L_k$ commute for all pairs $\theta_j, \theta_k$~\cite{liu2019quantum}. Furthermore, consider the following simple theorem.

\begin{theorem}\label{th:slddiag}
    If $\rho(\theta)$ is diagonalized by a set of $\theta$-independent projectors $\{\Lambda_k\}$, then $\{\Lambda_k\}$ diagonalizes $L_j$, for all $j$. 
\end{theorem}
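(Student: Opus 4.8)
The plan is to exhibit an explicit operator that is manifestly diagonal in $\{\Lambda_k\}$, show it satisfies the defining equation \eqref{eq:sld} for the SLD, and then appeal to the fact that the SLD is uniquely determined on the support of $\rho(\theta)$. First I would differentiate the spectral decomposition \eqref{eq:dm}. Because the projectors $\{\Lambda_k\}$ are assumed $\theta$-independent, only the eigenvalues carry the $\theta$-dependence, so
\begin{equation}
    \frac{\partial \rho}{\partial \theta_j} = \sum_{k} \frac{\partial \lambda_k}{\partial \theta_j}\,\Lambda_k,
\end{equation}
which is itself diagonal in $\{\Lambda_k\}$. This is the crucial structural observation: the derivative of $\rho$ shares its eigenbasis with $\rho$.

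Next I would propose the candidate $L_j = \sum_{k:\,\lambda_k \neq 0} (\partial \lambda_k/\partial \theta_j)\,\lambda_k^{-1}\,\Lambda_k$ and verify it directly. Using orthogonality and idempotency of the eigenprojectors, $\Lambda_k \Lambda_l = \delta_{kl}\Lambda_k$, both $L_j \rho$ and $\rho L_j$ reduce to $\sum_{k} (\partial \lambda_k/\partial \theta_j)\,\Lambda_k$ on the support of $\rho$, so their symmetric combination $\tfrac{1}{2}(L_j \rho + \rho L_j)$ equals $\partial \rho/\partial \theta_j$, as required. Since this $L_j$ is a real linear combination of the $\Lambda_k$, it is by construction diagonalized by $\{\Lambda_k\}$, which is exactly the claim.

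The only delicate point --- and the step I expect to be the main obstacle --- is the treatment of the kernel of $\rho(\theta)$, where the SLD is not uniquely fixed by \eqref{eq:sld}. A clean way to handle this is to pass to the eigenbasis and read \eqref{eq:sld} entrywise as $\tfrac{1}{2}(\lambda_m + \lambda_n)(L_j)_{mn} = (\partial \rho/\partial \theta_j)_{mn}$; the right-hand side is diagonal, so every off-diagonal entry with $\lambda_m + \lambda_n \neq 0$ is forced to vanish, while the remaining freedom lives entirely on pairs of indices in $\ker\rho$ and can be chosen to respect the block structure (and in any case does not affect the QFIM \eqref{eq:qfim}). Up to this standard caveat, the argument is otherwise a routine verification.
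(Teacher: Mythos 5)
Your proposal is correct and follows essentially the same route as the paper's proof: differentiate the spectral decomposition using the $\theta$-independence of the projectors, posit the ansatz $L_j = \sum_k (\partial \lambda_k/\partial \theta_j)\lambda_k^{-1}\Lambda_k$, and verify it satisfies \eqref{eq:sld}. Your additional care with the kernel of $\rho(\theta)$ (restricting the sum to $\lambda_k \neq 0$ and noting the residual freedom there does not affect the QFIM) is a rigor refinement the paper's one-line verification omits, but it does not change the underlying argument.
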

\begin{proof}
    Under the assumption that $\Lambda_k(\theta) = \Lambda_k$, $\rho(\theta) = \sum_{k} \lambda_k(\theta) \Lambda_k$ and
    $\frac{\partial \rho(\theta)}{\partial \theta_j} = \sum_{k} \frac{\partial \lambda(\theta)}{\partial \theta_j} \Lambda_k$ for all $j$. Thus, by taking the ansatz $L_j = \sum_{k} l_{jk} \Lambda_k$, one verifies that \eqref{eq:sld} is solved with $l_{jk} = \frac{\partial \lambda(\theta)}{\partial \theta_j}\frac{1}{\lambda(\theta)}$, for all $j$.
\end{proof}

Furthermore, when Theorem \ref{th:slddiag} is valid, $L_i$ and $L_j$ commute for all $i, j$ and an estimator $\thetahat$ from projective measurements on the $\{\Lambda_k\}$ basis is asymptotically optimal given $\rho(\theta)$. Plugging the description for $L_j$ from Theorem \ref{th:slddiag} in \eqref{eq:qfim} yields
\begin{align}
    & \mathcal{F}_{ab} = \sum_{k} \frac{1}{\lambda_k}\frac{\partial \lambda_k}{\partial \theta_a}\frac{\partial \lambda_k}{\partial \theta_b}. \label{eq:QFIM}
\end{align}

\section{Quantum Network Tomography}\label{sec:tomography}

\begin{figure*}
\begin{centering}
\begin{subfigure}[b]{0.33\textwidth}
    \centering
    \includegraphics[scale=0.16]{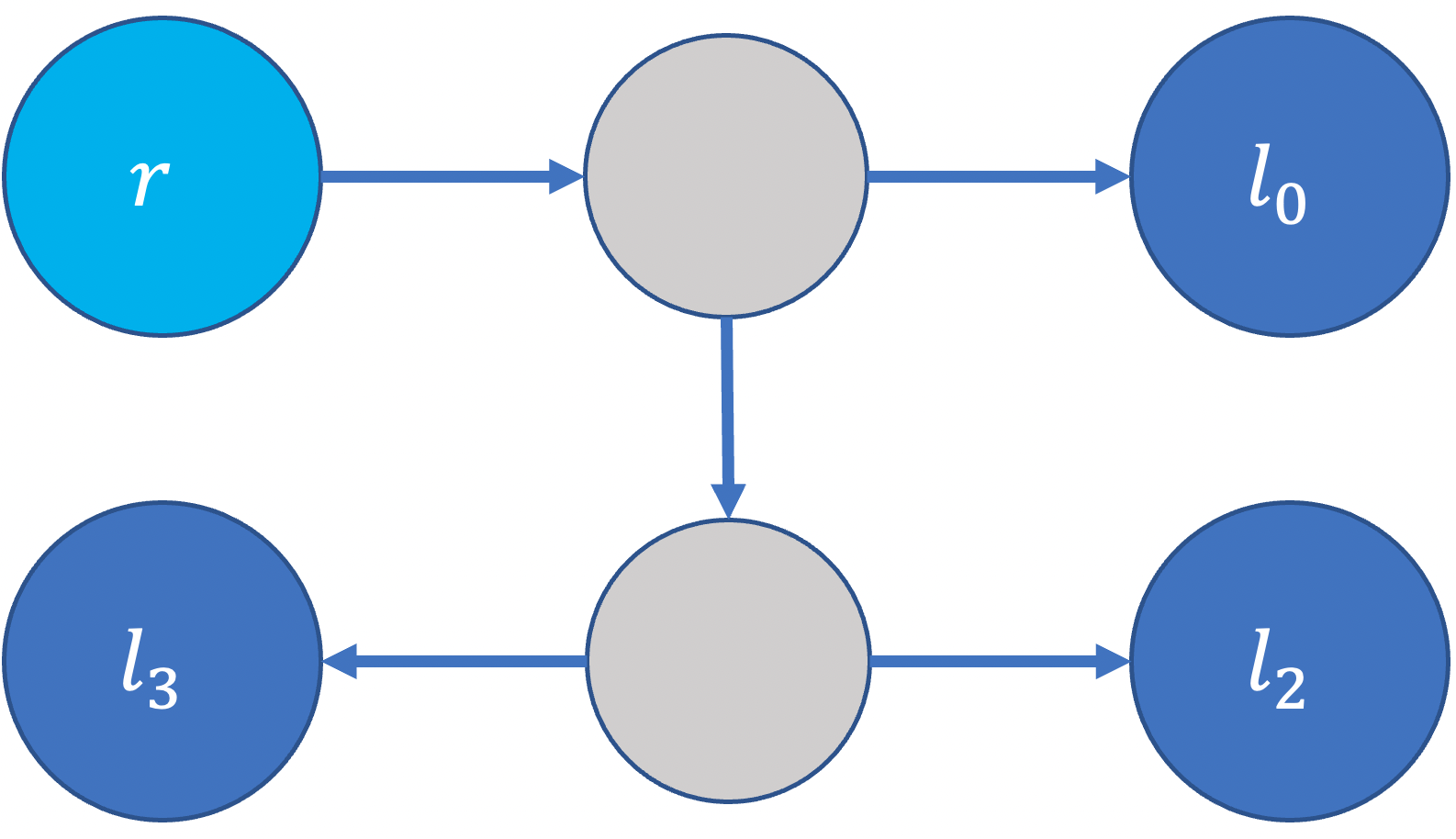}
    \caption{State distribution.}
    \label{subfig:dist}
\end{subfigure} \hfill
\begin{subfigure}[b]{0.33\textwidth}
    \centering
    \includegraphics[scale=0.16]{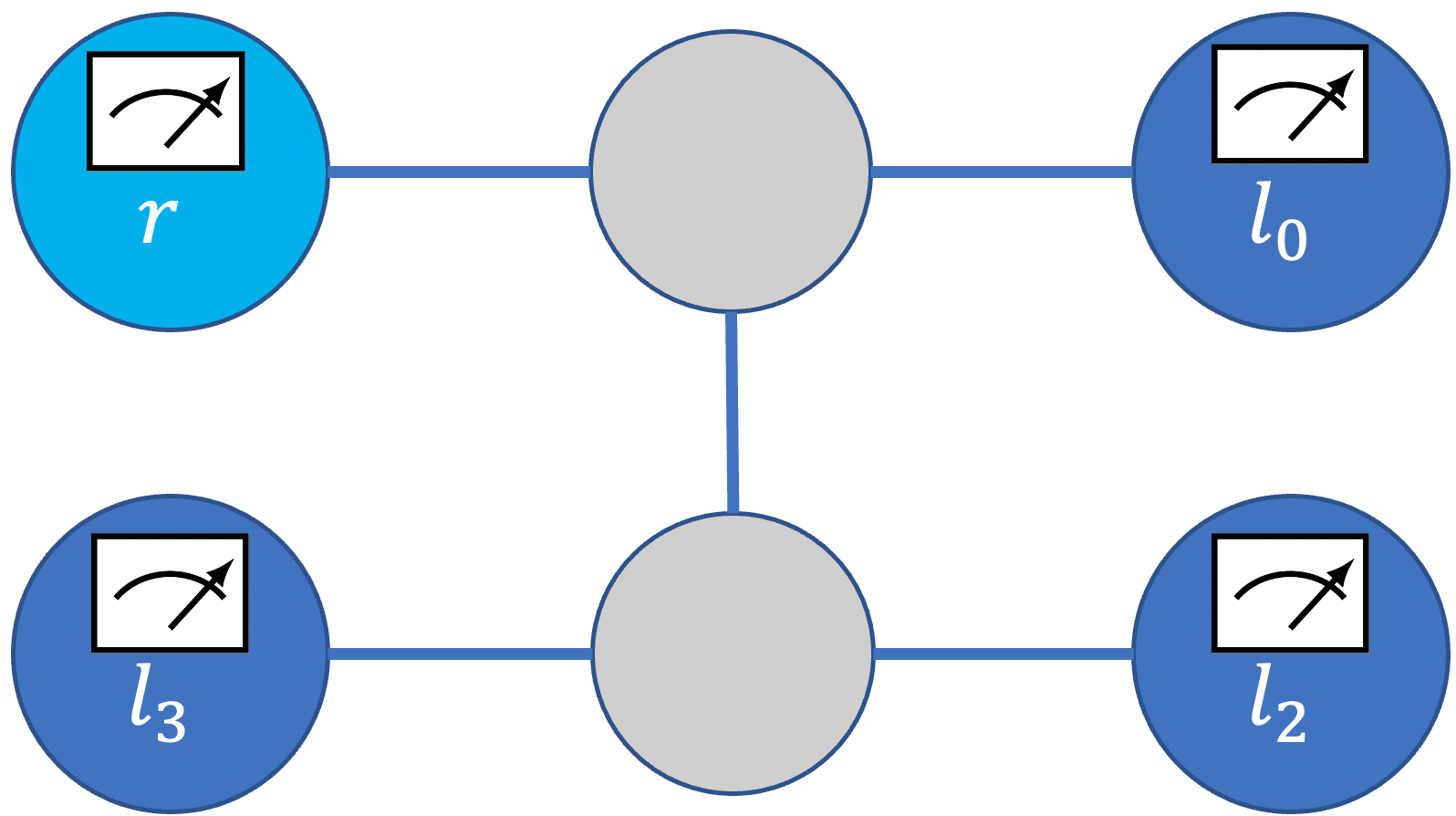}
    \caption{End-to-end measurements.}
    \label{subfig:measurements}
\end{subfigure} \hfill
\begin{subfigure}[b]{0.27\textwidth}
    \centering
    \includegraphics[scale=0.16]{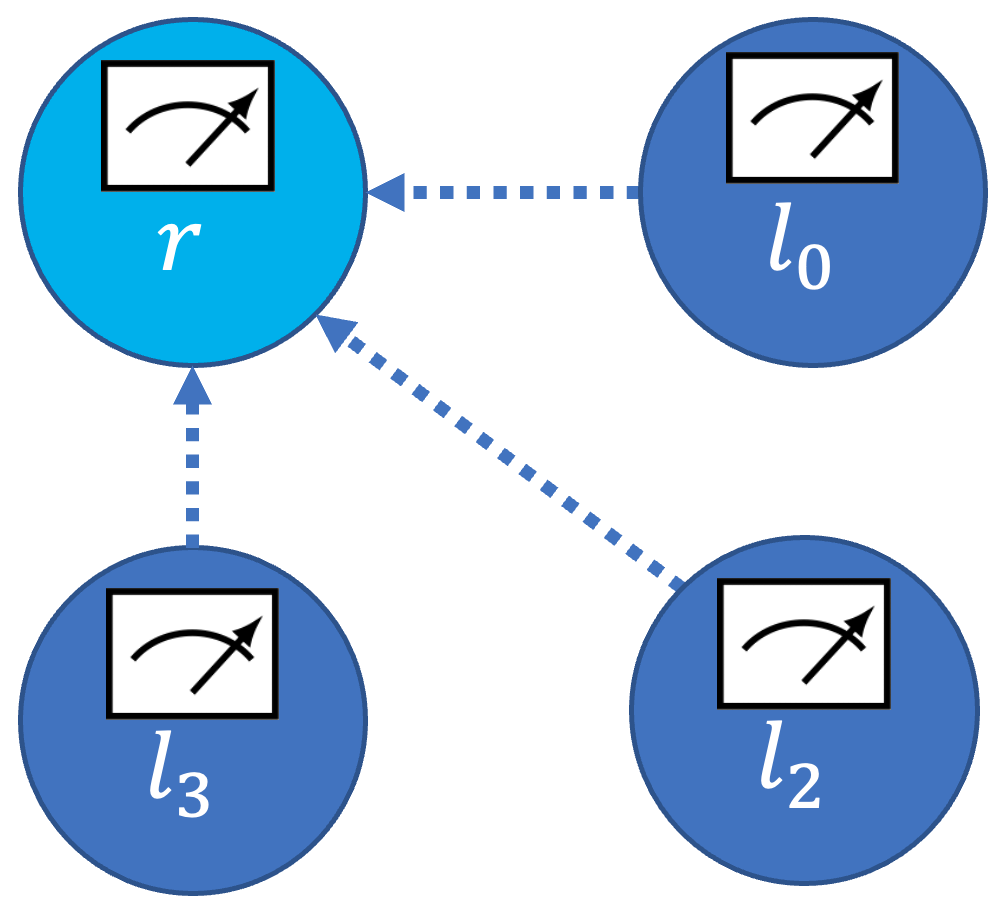}
    \caption{Collection and estimation.}
    \label{subfig:collection}
\end{subfigure} \hfill
\end{centering}
\caption{Quantum network tomography on trees. The tree shown is formed by two intermediate nodes and each channel is described by a Kraus decomposition of known form (Equation \ref{eq:gench}). The goal is to estimate the probabilities describing each channel in the tree. (\subref{subfig:dist}) A mixed state $\rho(\theta)$ is distributed from the root (node $r$) towards the leaves $l_j$. To perform distribution, the root and the intermediate nodes (small gray circles) transmit a qubit state to its neighbors following arrow direction. (\subref{subfig:measurements}) When all leaves receive qubits, measurements are performed to obtain classical data for estimation. (\subref{subfig:collection}) Finally, end-nodes transmit the classical data from measurements to a fusion center in the network, represented here by the root node, for estimation to be performed.}
\label{fig:tomography}
\end{figure*}

In this section we rephrase Problem \ref{prob:quantumest} for a networking setting.
We begin by describing the problem of channel tomography, which naturally leads to the definition of the network version. Consider the simplest non-trivial network system formed by two end-nodes $u$ and $v$ connected by an edge $e = (u, v)$ representing the quantum channel $\mathcal{E}_e$ following \eqref{eq:gench}. The channel tomography problem refers to quantum process tomography of $\ch_e$. In it's most general version, one does not assume knowledge of the form of the channel in terms of Kraus operators that yield a physically valid operator-sum representation of $\ch_e$. Then, the solution for this general problem is a set of unitaries $\{U_{ek}\}$ and an estimator $\thetahat_e$ that characterize the operator-sum representation.

This general problem is difficult, since the unitaries must be characterized from measurements, treating $\ch_e$ as a black-box. We simplify this problem by assuming that we know the set of unitary operators for which a valid operator-sum representation exists and focus on the estimation of $\theta$. Thus, the formal definition of the channel tomography problem of interest is as follows.

\begin{problem}[Quantum Channel Tomography]\label{prob:chtomo}
    Given a set of $M$ unitary operators $\{U_{k}\}_e$ for which the quantum channel $\ch_e$ has the form given in \eqref{eq:gench}, estimate a probability vector $\thetahat_e \in \mathbb{R}^{M}$ that characterizes $\ch_e$.
\end{problem}

The channel tomography problem is an instance of Problem \ref{prob:quantumest} since, in order to estimate the probability vector, it is necessary to use the channel to prepare and measure an ensemble of mixed states that depends on $\theta$ to obtain statistics for estimation. In this case, $u$ and $v$ are end-nodes and can perform arbitrary measurements on the ensemble, share the classical results of the measurements and compute $\thetahat_e$.

The quantum network tomography problem, which is depicted in Figure \ref{fig:tomography}, extends the channel problem to networks with the caveat that only end-nodes can contribute information for estimation. We now present the formal definition of the quantum network tomography problem for trees, which is one of the main contributions of this article. A tree $T = (V, E)$ is a connected graph with no cycles.

\begin{problem}[Quantum Network Tomography]\label{prob:gentomograpy}
Given a quantum network with topology given by a tree $T = (V, E)$, with nodes partitioned into disjoint sets $V_E$ and $V_I$ and a set of unitary operators $\{U_k\}_e$ for each $e \in E$, estimate a probability vector $\thetahat_e$ characterizing $\ch_e$ for all $e \in E$ using measurement statistics from nodes in $V_e$ exclusively.
\end{problem}

The main difference between Problems \ref{prob:chtomo} and \ref{prob:gentomograpy} is the restriction that measurement observations used for estimation must come from end-nodes in the network case. If such a restriction is dropped, the network problem reduces to the channel problem and the best way to obtain estimators is to treat each channel independently and solve Problem \ref{prob:chtomo} for all channels in the network. When the restriction is considered, joint estimation of the channel parameters must be carried out in the general case. The joint estimation reflects the fact that channels must be jointly used to prepare an ensemble of states described by $\rho(\theta)$ that can be measured by the end-nodes to provide statistics for estimation.

\subsection{Parameterization and State Distribution}

With the formal definition of the tomography problem, we propose a family of parameterization processes that cast Problem \ref{prob:gentomograpy} into an instance of Problem \ref{prob:quantumest}.
In principle, the preparation of $\rho(\theta)$ must target qubits in the end-nodes for measurements to be performed. It is straightforward that such preparation must use the channels in the network in order to make the ensemble dependent on $\theta$. Regardless of how channels are used to prepare the ensemble, the parameterization process can be expressed as an abstract, multi-qubit quantum channel $\mathcal{N}_{\theta}$ acting on a locally-prepared ensemble $\rho_0$ as
\begin{align}
    & \rho(\theta) = \mathcal{N}_{\theta}(\rho_0) \label{eq:abstract_tomo}.
\end{align}
In our terminology, a locally-prepared state is any state that is separable with respect to the nodes of the network such that $\rho_0$ follows
\begin{align}
    \rho_0 = \bigotimes_{v \in V}\rho_{0v} \label{eq:localstate},
\end{align}
where $\rho_{0v}$ is a multi-qubit state located in a quantum register in node $v$. Following the quantum process tomography nomenclature, we refer to $\rho_0$ from now on as the probe state. The key aspect of quantum network tomography is that both $\rho_0$ and $\mathcal{N}_{\theta}$ are allowed to be chosen as part of the problem's solution.

We describe the abstract quantum channel $\mathcal{N}_{\theta}$, and hence the parameterization process, with a process for network state distribution. The state distribution problem refers to the preparation of generic quantum states across target end-nodes, using channels and intermediate nodes to propagate entanglement. In this setting, nodes start with a state that follows \eqref{eq:localstate}, which is progressively transformed through local operations and quantum state transmissions across channels. State distribution is a natural approach to define $\mathcal{N}_{\theta}$ because qubits transmitted across a link $e$ evolve according to $\mathcal{E}_e$ and, thus, can be used to incorporate $\{\theta_{ek}\}$ into the density matrix describing the distributed state. This process yields a quantum channel $\mathcal{N}_{\theta}$ that is a composition of $\{\mathcal{E}_e\}$ for all $e \in E$ used for distribution of $\rho_0$ to the end-nodes.


We propose a distribution process for trees (Algorithm \ref{alg:statedist}) to distribute quantum states across the network with properties tailored for tomography. Since trees have no cycles, there exists exactly one path interconnecting any two nodes in the network. Furthermore, the process uses each link in the network to transmit a qubit among neighboring nodes exactly once. Such process is general in the sense that it captures any quantum state distribution operation across trees under the restriction that a single qubit is transmitted between the nodes for distribution.

\begin{algorithm*}[ht]
    \SetAlgoLined
    \SetStartEndCondition{ }{}{}
    \SetKwProg{Fn}{def}{\string:}{}\SetKwFunction{Range}{range}
    \SetKw{KwTo}{in}\SetKwFor{For}{for}{\string:}{end}
    \SetKwIF{If}{ElseIf}{Else}{if}{:}{elif}{else:}{}
    \SetKwFor{While}{while}{:}{fintq}

    \SetKwInOut{Input}{input}\SetKwInOut{Output}{output}

    \Input{tree $T$; circuit $\mathcal{C}$; function $\eta$}
    \Output{distributed state $\rho(\theta)$ across ${r} \cup L_{T}$}

    \BlankLine
    \For{$k \in \{0, 1, \ldots, h_{\max}\}$}{
       \For{$v \in H(k)$}{
            $v$ receives qubit in state $\mathcal{E}_{P_v}(\rho_v)$ from $P_v$\;
            $v$ performs circuit $\rho_{v'} := \mathcal{C}_v(\dyad{0_{n_{v}}} \otimes \mathcal{E}_{P_v}(\rho_v))$\;
            \For{$u \in S_{v}$}{
            $v$ transfers qubit state indexed by $\eta(v, u)$ in state $\rho_{u}$ to $u$\;
            }
        }      
    }
    \caption{Tree state distribution}\label{alg:statedist}
\end{algorithm*}

Thus, consider the following definitions. Let $T = (V, E)$ be a tree rooted in node $r$. The height function $h: V \to \mathbb{Z}^{+}$ is defined such that $h(v)$ is the hop-distance of the path connecting $v$ and $r$ in $T$. Let $h_{\max}= \max_v h(v)$. A leaf $v$ of $T$ is a node with height $h_{\max}$. Let the set $L_T = \{v: h(v) = h_{\max}\}$ denote the set of leaves of $T$ and $H(k) = \{v: h(v) = k\}$ denote the set of all nodes of $T$ with height $h$. Let the predecessor of 
$v$ be the neighbor $P_v$ of $v$ with $h(P_v) = h(v) - 1$, which is not defined for $r$. Let the successor set of $v$ be $S_v = \{u : (v, u) \in E \text{ and } h(u) = h(v) + 1\}$ for all $v$ in $T \setminus L_T$, and $S_v = \emptyset$ for all $v \in L_T$. Furthermore, let $\mathcal{C}$ denote the description of a quantum circuit applied in the nodes of the network such that $\mathcal{C}_v$ is a generic, multi-qubit circuit applied in node $v$. Given the circuit $\mathcal{C}_{v}$, let $n_v$ denote the number of qubits in $\mathcal{C}_{v}$. Let $\eta: V \times V \to \mathbb{Z}^{+}$ be a function determining the index of qubits to be transmitted between neighbors. $\eta(u, v)$ determines the index of the output qubit from $\mathcal{C}_u$ to be transmitted from $u$ to $v$ after $\mathcal{C}_u$ is performed. Finally, let $\dyad{0_n}$ denote the pure state $\dyad{0}^{\otimes n}$.

The process in Algorithm \ref{alg:statedist} is understood as follows. The inputs are the rooted tree $T$, a quantum circuit description $\mathcal{C}$ and a qubit index function $\eta$. In order to simplify the process description, we assume that the initial local state $\rho_0$ is encoded in the circuit description $\mathcal{C}$, such that all nodes start with registers prepared in a pure state of form $\dyad{0_n}$. The process begins with the root note executing $\mathcal{C}_{r}$. Note that, since the root has no predecessor, line 3 has no effect for $v = r$. Then, the root sends one qubit from the output of $\mathcal{C}_r(\dyad{0_{n_r}})$ to each one of its successors, and keeps $n_{r} - |S_r|$ qubits in memory. The function $\eta(r, v)$ specifies which of the $n_{r}$ qubit states is transmitted from $r$ to $v$, for all neighbors $v$ of $r$. In the process description, $\rho_v$ denotes the state prepared in $S_v$ and sent to $v$.  Whenever node $v$ receives a qubit, the channel interconnecting $P_v$ and $v$ transforms the initially transmitted state $\rho_v$ into $\mathcal{E}_{P_v}(\rho_v)$. Thus, node $v$ applies the quantum circuit $\mathcal{C}_v$ to state $\dyad{0_{n_{v}}} \otimes \mathcal{E}_{P_v}(\rho_v)$ as input. The node proceeds by transmitting one qubit to each one of its successors following $\eta$, again keeping $n_{v} - |S_v|$ qubits in memory. The process terminates when all leaves receive a qubit state from its predecessors. It is clear from line 6 of Algorithm \ref{alg:statedist} that each link in $T$ is used exactly once for distribution.

The generality of the process described in Algorithm \ref{alg:statedist} stems from the freedom of defining both $\mathcal{C}$ and $\eta$, although it should be clear from its description that the distributed mixed state depends on $\theta_{ek}$, for all $e \in E$, as long as $\mathcal{C}_{v}$ acts on $\dyad{0_v} \otimes 
\mathcal{E}(\rho_v)$ with non-separable quantum operations.

There exists a direct mapping determining the form of $\mathcal{N}_{\theta}$ from $T$, $\mathcal{C}$ and $\eta$, although writing the mathematical form for the general case is lengthy.
In the remainder of this article, we consider distribution circuits with $n_v = |S_v| + \delta_{vr}$ for all nodes $v \in V$, where $\delta_{vr}$ is the discrete delta function, what implies that no qubits remain in the intermediate nodes after distribution. Furthermore, we focus on the description of $\rho(\theta)$ directly rather then explicitly writing $\mathcal{N}_\theta$.
\section{Tomography in star networks}\label{sec:stars}

In this section we specialize our approach to star quantum networks. The star graph is a simple type of tree for which we can
observe the workings of the distribution process and study tomography in detail. We consider a particular
bi-partition of the nodes, where all leaves of the star are end-nodes and the intermediate node is the
center node, as depicted in Figure \ref{fig:star}. Following our definitions from Section \ref{sec:background}, this implies that $V_E = \{0, \ldots, n - 1\}$
and $V_I = \{n\}$ for a $(n + 1)$-node star graph.

Star graphs are simple in the context of Problem \ref{prob:gentomograpy} because the distance between end-nodes
is always two. In addition, we simplify the problem further by considering a scenario where all the channels
are described by a single Pauli operator following \eqref{eq:single_pauli}. This simplification helps both
the description and evaluation of estimators without rendering the problem trivial. Such class of channels
suffices to demonstrate the difficulty of performing tomography in a network and builds intuition on how
to approach the problem for more complex channels like the generic depolarizing channel, which is described by all
Pauli operators. We describe our methods for
pure bit-flip channels and discuss how they generalize to channels described by the other two Pauli
operators. Note that we consider the same Pauli operator to describe all channels in the star.

\begin{figure}
    \centering
    \includegraphics[scale=0.25]{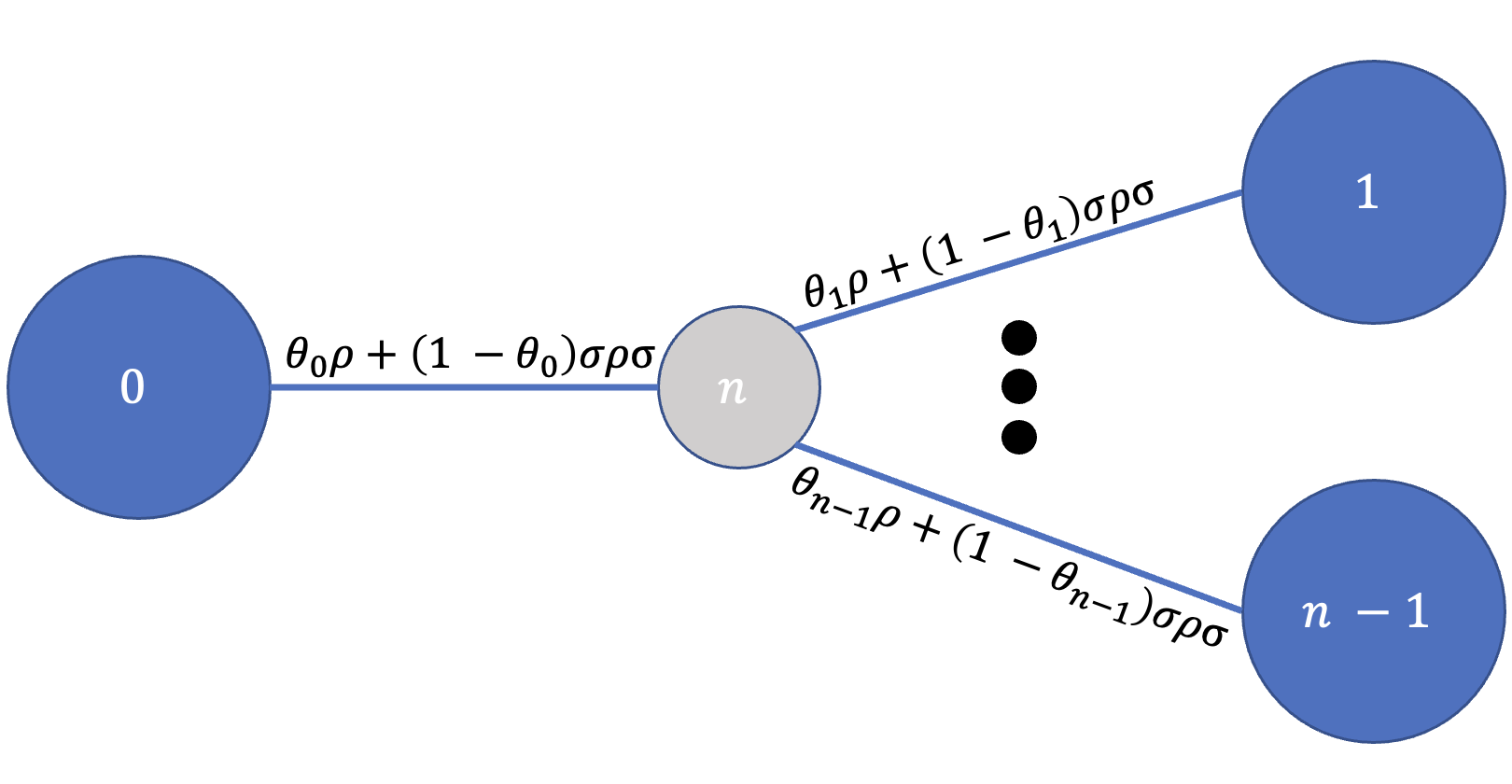}
    \caption{Star quantum networks with $n$ end-nodes depicted as large blue circles. Each channel $j$ in the network is described by a Pauli operator $\sigma$ and a probability $\theta_j$. The operator-sum representation for the channels is written along the edges of the network.}
    \label{fig:star}
\end{figure}

\subsection{Tomography in the basis of Pauli operators}

We start by describing a solution for the tomography in the star that uses states in the Pauli basis. Under the
assumption of bit-flip channels, we target states in the $Z$ basis. The same analysis follows for the other
single Pauli channels by selecting a different basis according to the Pauli operator under consideration.
In the case of $Z$ we use the $X$ basis and in the case of $Y$ we can use either the $Z$ or the $X$ basis.

We are interested in the following distribution process based on Algorithm \ref{alg:statedist}. The root prepares
a single qubit in state $\ket{0}$. Since the distribution process assumes that qubits are initialized in state $\ket{0}$,
the root applies the identity operator as a circuit, \textit{i.e} $\mathcal{C}_0 = I$. The root transmits the state
to the intermediate node, which receives the mixed state
\begin{align}
    & \ch_0(\rho_n) = \theta_0 \dyad{0} + (1 - \theta_0) \dyad{1}.
\end{align}
The intermediate node applies the generalized Toffoli gate
\begin{align}
    & T_n = \dyad{0} \otimes I^{\otimes n - 1} + \dyad{1} \otimes X^{\otimes n - 1}
\end{align}
controlled by the qubit it received on $n - 1$ qubits in its quantum register such that
$\mathcal{C}(n) = T_{n - 1}$, what yields the mixed state
\begin{align}
    &  \theta_0 \dyad{0}^{\otimes n - 1} + (1 - \theta_0) \dyad{1}^{\otimes n - 1},
\end{align}
The intermediate node assigns its qubits to end-nodes following the order of node labels,
such that the qubit indexed by $j$ is sent to node $j + 1$, with $0 \leq j \leq (n - 2)$.
In terms of the inputs of the distribution process, the mapping function selected in the
intermediate node is $\eta(n, j + 1) = j$ for $j \in \{0, 1, \ldots, n - 2\}$. The final
mixed state received by the end-nodes is the $(n - 1)$-qubit state
\begin{align}
    & \rho(\theta) = \sum_{s \in \mathbb{B}^{n - 1}} \alpha(s) \dyad{s},\label{eq:dmstrings}
\end{align}
where
\begin{multline}
    \alpha(s) = \theta_0 [\prod_{j = 1}^{n - 1} \delta_{\overline{s}_j} \theta_j + \delta_{s_j} (1 - \theta_j)] \\
    + (1 - \theta_0) [\prod_{j = 1}^{n - 1} \delta_{s_j} \theta_j + \delta_{\overline{s}_j} (1 - \theta_j)] \label{eq:stringprob}
\end{multline}
where $\delta_{s_j}$ is the discrete pulse function equal to $1$ if the bit $s_j = 1$ and $\overline{s}_j= 1 - s_j$. In this case, the final density matrix spams all of the binary strings with $n - 1$ bits and is
diagonal on the $Z$ basis.

Since the density matrix in \eqref{eq:dmstrings} is diagonal on the $Z$ basis and only
the eigenvalues depend on the vector $\theta$, it is easy to verify that the SLD $L_j$ for
each parameter $\theta_j$ is also diagonal on the Z basis. Thus, in the context of the
QCRB, the POVM of choice to measure $\rho(\theta)$ is the set of projective measurements
on the $Z$ basis for the Hilbert space of  $n - 1$ qubits, which is attained by
local measurements of each qubit in the $Z$ basis.

Given that local measurements are performed by the end-node,
we can write the statistics of flips in each particular bit as follows. Let
$S_j\in \{0,1\}$ denote the measurement outcome
of the qubit in node $j$. A bit-flip is measured in node $j$ if a flip occurs exclusively on one of the  channels $0$ and $j$. Let $F_j\in \{0,1\}$ denote the absence or presence of a bit flip on channel $j$.  We have
\begin{align}
    & S_j = F_0 \oplus F_j 
\end{align}
where $\oplus$ denotes the XOR operation. Thus, the probability of measuring a bit-flip in qubit $j$ is
\begin{align}
    & Pr[S_j = 1] = \theta_0 (1 - \theta_j) + (1 - \theta_0) \theta_j \label{eq:bitflip},
\end{align}
for all $j \in \{1, \ldots n - 1\}$. Using \eqref{eq:bitflip} for all channels in the star
yields a system of $n - 1$ first-order, bi-variate polynomial equations over $n$ variables. However,
all of these equations depend on $\theta_0$ and, if $\theta_0$ can be computed, the
system reduces to a system of $n - 1$ independent linear equations.

The dependency between $S_j$ and $F_0$ introduces dependencies between all pairs of variables
$S_j, S_k$ for all cases where $\theta_0 \neq 0.5$. This dependency can be exploited to obtain
an equation for $\theta_0$ as follows. Let $S_{jk} = S_j S_k$ denote the joint random variable obtained 
by concatenating $S_j$ and $S_k$. Also, let $Pr[S_j = 1] = p_j$ and $Pr[S_{jk} = 11] = p_{jk}$.
The probability that any two flips are jointly measured by end-nodes $j$ and $k$ is
\begin{align}
    & p_{jk} = \theta_0 (1 - \theta_j) (1 - \theta_k) + (1 - \theta_0) \theta_j \theta_k. \label{eq:jointflip}
\end{align}
The probability in \eqref{eq:bitflip} can be re-arranged to obtain
\begin{align}
    & \theta_j = \frac{p_j - \theta_0}{1 - 2 \theta_0}, \label{eq:system}
\end{align}
which is valid for all $j \in \{1, 2, \ldots, n - 1\}$. Plugging back on \eqref{eq:jointflip} yields
the quadratic equation
\begin{align}
    &  a_{jk}(1 - \theta_0)\theta_0 + c_{jk} = 0 \label{eq:theta0}
\end{align}
for $\theta_0$, where
\begin{align}
    & a_{jk} = 1 + 4 p_{jk} - 2(p_j + p_k), \\
    & c_{jk} = p_jp_k - p_{jk}.
\end{align}

The form of \eqref{eq:theta0} is symmetric with respect to probabilities since
if $\theta^{*}$ solves the equation, $(1 - \theta^{*})$ also does. This inherent symmetry
implies that solving \eqref{eq:theta0} for a specific pair of end-nodes $(j, k)$ determines
two possible values for $\theta_0$ that are valid for the measurement results. More interestingly,
the symmetry cannot be broken even when considering \eqref{eq:theta0} for all pairs of end-nodes.

Finally, combining the system from \eqref{eq:bitflip} with \eqref{eq:theta0} gives two vectors
$\thetahat$ for the channel parameters that are compatible with the observations. Following our characterization,
the estimators obtained from this method do not identify the parameters completely, since we have two possible values of
$\thetahat$. Given the form of the solutions, identifiability can be obtained by assuming either low or high noise
regime for $\theta_0$. In this case, it suffices to select the solution of \eqref{eq:theta0} that is greater than $0.5$ in the low noise regime and the one the is smaller in the high one.


We simulate a four-node star with bit-flip channels characterized by $\theta = [0.8, 0.3, 0.4]$ to demonstrate the estimators using $Z$ basis measurements. We plot the estimated value for the three parameters with respect to the number of measurement outcomes used for estimation in Figure \ref{fig:z}. The symmetry in the estimators appears in the form of the two curves obtained for each parameter. The two values obtained for $\theta$ are $[0.8, 0.3, 0.4]$ and $[0.2, 0.7, 0.6]$, and identifiability can only be achieved by making an assumption on the error model.

\begin{figure}
    \centering
    \includegraphics[scale=0.46]{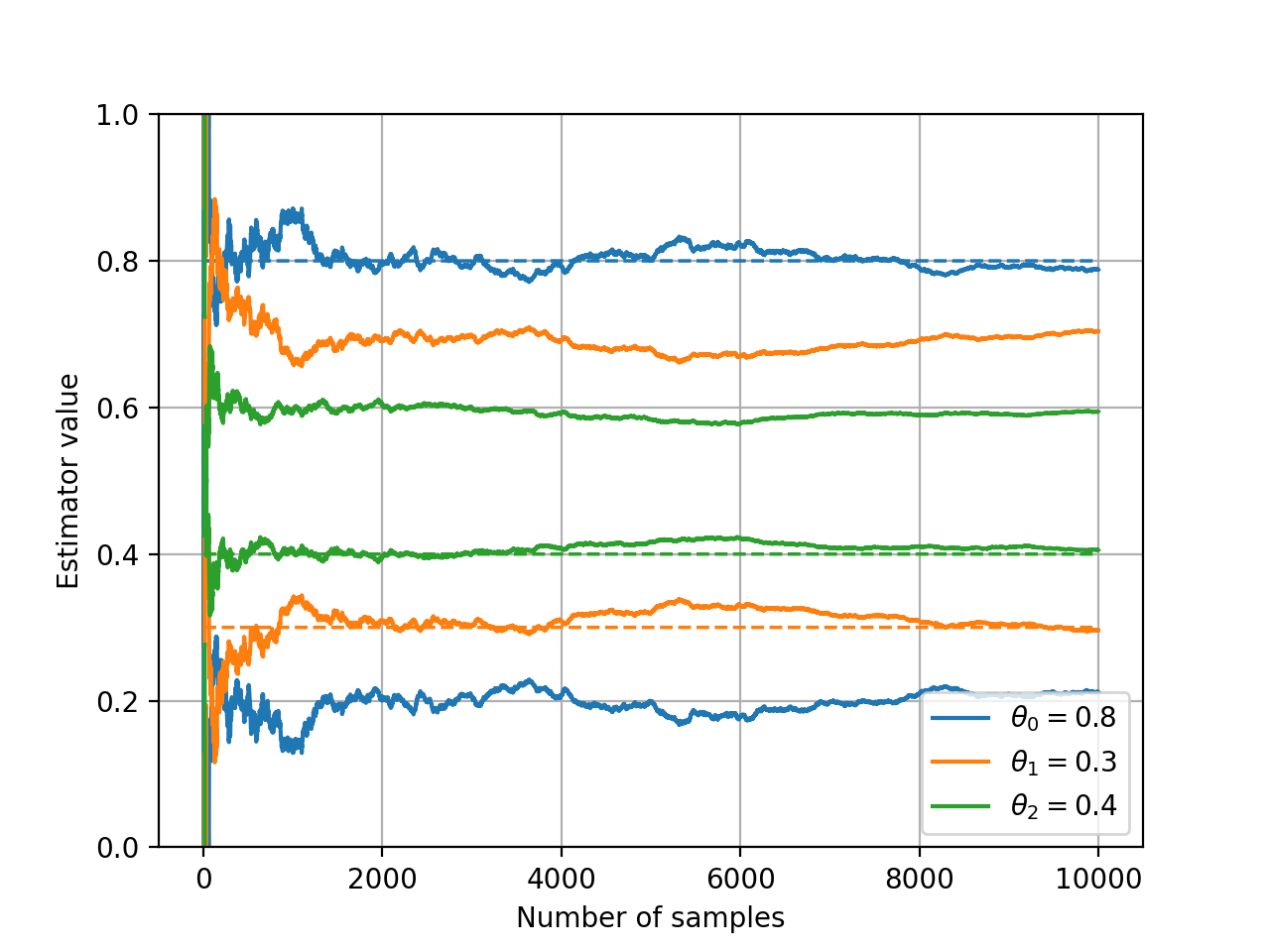}
    \caption{Estimators based on outcomes from $Z$ basis measurements for a four-node star graph with ground-truth parameter vector $\theta = [0.8, 0.3, 0.4]$. The dotted lines show ground-truth values for the parameters. The estimators cannot identify the parameters and there are two solutions compatible with observations.
    }
    \label{fig:z}
\end{figure}

\subsection{Tomography in the GHZ basis}

The need to introduce another assumption to obtain identifiability for the parameters motivates the search for other estimators.
We now proceed to describe how GHZ states can be used to address this issue. In particular, we define estimators that use global measurements in the end-nodes, which can be attained by pre-sharing entanglement among the end-nodes. One can argue that end-to-end entanglement is an important resource in a quantum network, and using global measurements introduces complexity in the implementation of our tomography process. Nonetheless, it is of interest to the scope of this work to analyze the benefits that entanglement may provide in the network tomography setting.

A GHZ state is a maximally entangled state that generalizes the Bell basis to more than two qubits. From \eqref{eq:ghz}, $n$ bits are necessary to describe an $n$-qubit GHZ state. GHZ states are interesting in this scenario because the state obtained after applying a Pauli operator on one qubit of a GHZ state is also a GHZ state. Formally, the state $\ket*{\Phi_{s}^{b}}$ evolves under the application of a Pauli operator $\sigma$ on its $j$-th qubit as
\begin{equation}
    \sigma \ket*{\Phi_{s}^{b}} =
    \begin{cases}
        \ket*{\Phi_{s \oplus s_j}^{b}}, \text{ if } \sigma = X, \\
        i\ket*{\Phi_{s \oplus s_j}^{b \oplus 1}}, \text{ if } \sigma = Y, \\
        \ket*{\Phi_{s}^{b \oplus 1}}, \text{ if } \sigma = Z,
    \end{cases}
\end{equation}
where $s_j$ is a binary string with $1$ in position $j$ if $j > 0$ and string $11\ldots1$ if $j$ = 0.

The instance of the distribution process used previously to distribute a mixed state diagonal on the $Z$ basis can be modified to distribute a mixed state diagonal on the GHZ basis by simply changing the circuit applied by the root. Instead of sending state $\ket{0}$ to the intermediate node, the root prepares the Bell state $\ket*{\Phi_{0}^{0}}$ and sends the second qubit to the intermediate node. This is achieved by the circuit $\mathcal{C}_0 = [H \otimes I, \text{CNOT}]$, assuming that the CNOT gate is controlled by the first argument. When describing circuits with multiple gates, we use an ordered list notation $[, ]$ to indicate that gates are applied on the order they appear inside the square brackets. We select the second qubit to be transmitted just to simplify notation because \eqref{eq:ghz} uses the first qubit as the reference binary value in the GHZ state superposition. If this is indeed the only modification considered, the mixed state received by the intermediate node is
\begin{align}
    & I \otimes \ch_0(\Phi_{0}^{0}) = \theta_0 \Phi_{0}^{0} + (1 - \theta_0) \Phi_{1}^{0} \label{eq:interghz0}
\end{align}
and the final mixed state distributed is
\begin{align}
    & \rho(\theta) =  \sum_{s \in \mathbb{B}^{n - 1}} \alpha(s) \Phi_{s}^{0}, \label{eq:dmghz0}
\end{align}
with probabilities $\alpha(s)$ given by \eqref{eq:stringprob}.

By comparing \eqref{eq:dmstrings} and \eqref{eq:dmghz0}, there is no gain in using GHZ states. This is intuitively understood by considering the fact that only $n - 1$ bits of the GHZ state are used to parameterize the necessary information, which is the same amount of bits used in the initial case. Thus, the key to obtain parameter identifiability is to slightly modify the circuit applied by the intermediate node to transform equation \eqref{eq:interghz0} into the mixed state
\begin{align}
    &  \theta_0 \Phi_{0}^{0} + (1 - \theta_0) \Phi_{0}^{1} \label{eq:interghz}
\end{align}
before the intermediate node transmits. Departing from \eqref{eq:interghz}, the final mixed state distributed is described by the $n$-qubit density matrix
\begin{align}
    & \rho(\theta) = \sum_{s \in \mathbb{B}^{n - 1}} \theta_0 \beta_0(s) \Phi_{s}^{0} + (1 - \theta_0) \beta_1(s) \Phi_{s}^{1}, \label{eq:identifiable_dm}
\end{align}
where
\begin{align}
    & \beta_0 = \prod_{j = 1}^{n - 1} \delta_{\overline{s}_j} \theta_j + \delta_{s_j} (1 - \theta_j), \\
    & \beta_1 = \prod_{j = 1}^{n - 1} \delta_{s_j} \theta_j + \delta_{\overline{s}_j} (1 - \theta_j).
\end{align}

The implications of \eqref{eq:identifiable_dm} for estimation are profound. First, measuring the phase of the GHZ state gives a direct estimator for $\theta_0$ for all possible values. Whenever the phase flip is measured, a flip in the first channel occurred. Second, since the bit string describing both GHZ projectors in \eqref{eq:interghz} is $0$, the bits obtained by measuring the final mixed state characterizing the GHZ determines whether or not a bit-flip occurred in channel $j$. In particular, assume that the state in \eqref{eq:identifiable_dm} is measured in the GHZ basis, yielding the state $\ket*{\Phi_{s}^{b}}$. A bit-flip occurred in channel $0$ if, and only if, $b = 1$, while a flip occurred in channel $j > 0$  if, and only if, $s_j = 1$. Thus, we estimate all the parameters in the network by computing the number of times $b = 1$ and $s_j = 1$ in the strings obtained from GHZ measurements in a given set of observations.

In order to transform \eqref{eq:interghz0} into \eqref{eq:interghz} it is necessary to modify the circuits applied by the root and the intermediate node. The root circuit is incremented by applying the single qubit gate $XHX$ on the qubit that remains in the root after the CNOT, leading to $\mathcal{C}_0 = [H \otimes I, \text{CNOT}, XHX \otimes I]$. For the intermediate node, the circuit is extended with the application of the single qubit gate $HZ$ to the received qubit before using it as the control for the generalized $(n - 1)$-qubit Toffoli gate, which yields $\mathcal{C}_{n} = [HZ \otimes I^{\otimes n - 2}, T_{n - 1}]$.

The same circuit can be used to identify parameters for $Y$ channels with a modification on the estimators. When channels are described by $Y$, the intermediate node receives the state $\theta_0 \Phi_{0}^{0} + (1 - \theta_0)\Phi_{1}^{1}$. The estimators must change when such state is transmitted because the phase bit $b$ and the string $s$ determine together the occurrence of flips is the channel. It is possible to verify that a flip occurred in the first channel iff $b \neq \bigoplus_{j = 0}^{n - 1} s_j$. Since the occurrence of a flip in $\ch_0$ can always be detected, it is simple to relate $s_j$ to the occurrence of a flip in channel $j$. For the $Z$ case, it suffices to add the $(n - 1)$-qubit Hadamard gate $H^{\otimes n - 1}$ to the intermediate node circuit, such that $\mathcal{C}_0 = [H \otimes I, \text{CNOT}, HX \otimes I, H^{\otimes n - 1}]$. In addition, the $n - 1$ end-nodes receiving the qubits from the intermediate node must apply a Hardarmd gate before measuring in the GHZ basis. In this case, the bits characterizing the measured GHZ state provide direct estimators for the channel parameters as for $X$ channels.

\begin{figure}
    \begin{centering}
    \includegraphics[scale=0.46]{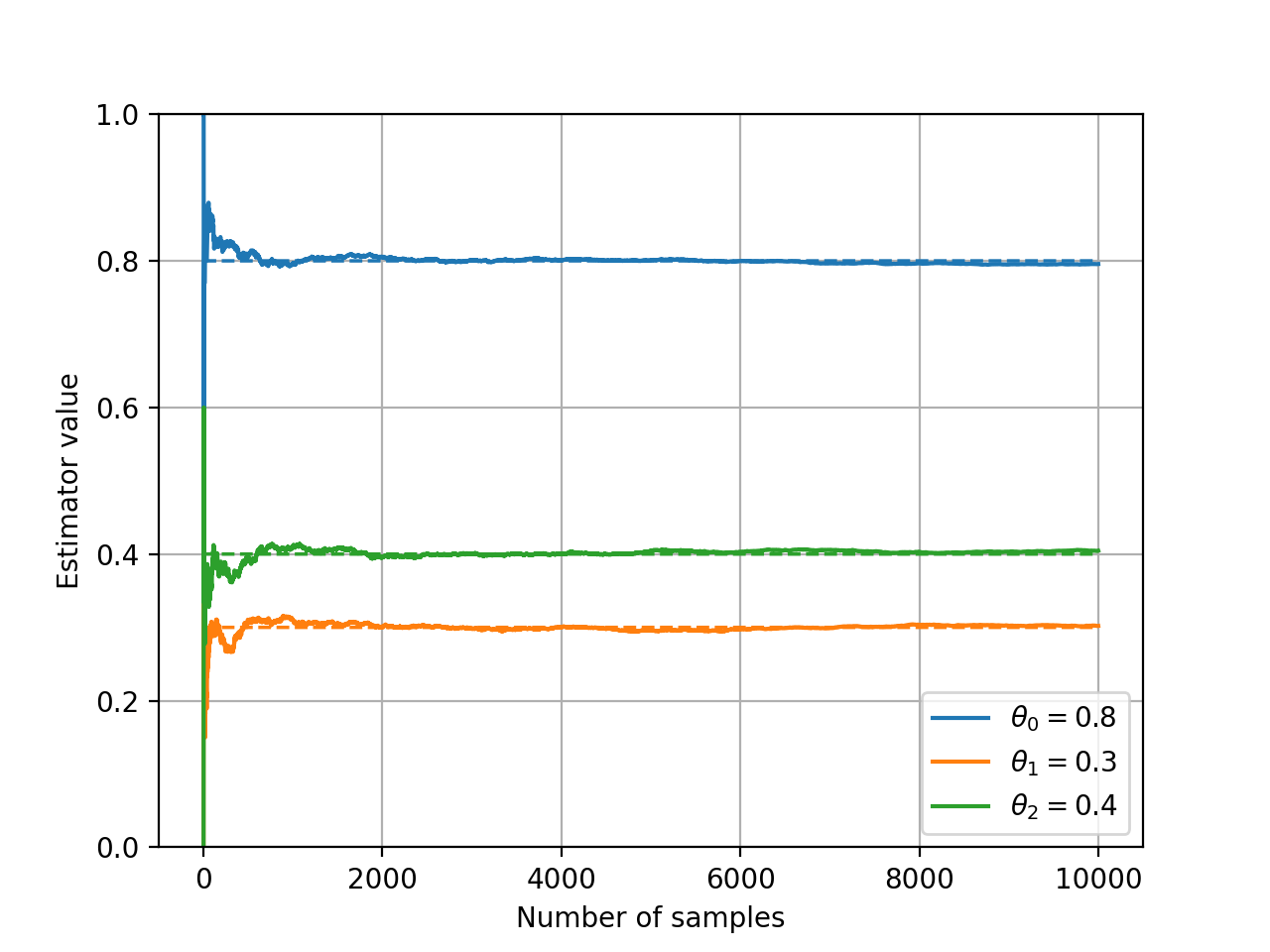}
    \end{centering}
    \caption{Esimators based on GHZ measurements for the four-node star graph with ground-truth parameter vector $\theta = [0.8, 0.3, 0.4]$. The estimators identify $\theta$ and provide a single solution for the tomography problem. Again, the dotted lines show ground-truth values. 
    }
    \label{fig:ghz}
\end{figure}

In order to compare the performance of the estimators based on GHZ measurement outcomes with the ones based on $Z$ basis measurements, we simulate the same four-node system reported in Figure \ref{fig:z} using the GHZ scheme. The results reported in Figure \ref{fig:ghz} show that the correct value for $\theta$ is identified and a single solution is obtained for the tomography problem. Moreover, the curves in Figure \ref{fig:ghz} are smoother than the ones in Figure \ref{fig:z}, what indicates that the variance for the GHZ-based estimator is less than the one for the $Z$-based estimator.

\subsection{Estimators and the QCRB}

The form of \eqref{eq:dmstrings} and \eqref{eq:identifiable_dm} fits into the definition of Theorem \ref{th:slddiag}, such that the QFIM follows \eqref{eq:QFIM} for both estimators. Moreover, we do not explicitly compute QFIM for the estimators, although it follows from Theorem \ref{th:slddiag} that we attain the QCRB in both cases because we use projective measurements on the basis that diagonalize the SLD of all parameters. Finally, the eigenvalues of $\rho(\theta)$ are first-order multivariate polynomials on both scenarios and evaluating \eqref{eq:QFIM} for such functions is straightforward, albeit space consuming.
\section{Conclusion}\label{sec:conclusion}

The quantum network tomography problem defined in this work connects quantum tomography with classical network tomography and targets the characterization of channels in a quantum network. The problem extends the quantum channel tomography problem to a network scenario considering that intermediate nodes do not provide information for estimation. Furthermore, we addressed the tomography problem in the context of networks with tree topology. We described a process for state distribution across trees that provides the necessary mixed states for end-nodes to measure. Our results for trees generalize to networks with arbitrary topology since graphs can be decomposed into trees and our methods can be applied to each tree in a decomposition. The estimators given for the star network with Pauli channels indicate that entanglement may provide advantages for tomography. The estimator obtained from global measurements outcomes in the end-nodes identifies the parameters without the need for any additional assumptions, in contrast with the lack of identifiability observed for local measurements. Such evidence motivates the description for the conditions under which entanglement enhances network tomography, as it has been previously investigated for other quantum estimation problems such as quantum sensing networks~\cite{proctor2018multiparameter}.

We identify three clear directions for future work. The first refers to the investigation of the star system for channels described by a single parameter and one generic unitary operator, as well as the case of the depolarizing channel. As a second direction, the description of the estimator that maximizes the QFIM for the star with single Pauli channels is of interest. Finally, framing the optimization problem to characterize the optimal way to partition a network into trees for tomography can bring insights on how our methods generalize to arbitrary networks.

{\em Acknowledgments}---This research was supported in part by the NSF grant CNS-1955834, NSF-ERC Center for Quantum Networks grant EEC-1941583 and by the MURI ARO Grant W911NF2110325.

\newpage
\bibliographystyle{unsrt}
\bibliography{references}


\end{document}